\documentclass[conference]{IEEEtran}
\IEEEoverridecommandlockouts

\usepackage{cite}
\usepackage{amsmath,amssymb,amsfonts}
\usepackage{amsthm}
	\newtheorem{definition}{Definition}

	\newtheorem{theorem}{Theorem}
\usepackage{algorithmic}
\usepackage{graphicx}
\usepackage{textcomp}
\usepackage{tikz}
	\usetikzlibrary{shapes.geometric}
	\usetikzlibrary{positioning}
	\usetikzlibrary{arrows, shapes, shapes.misc, chains, scopes, matrix, calc, fit, graphs, shadings, decorations.pathreplacing}
\usepackage{enumitem}
\usepackage{balance}
	
\usepackage{xcolor}

\usepackage{textcomp}
\usepackage{stfloats}
\usepackage{graphicx}
\usepackage{cite}
\usepackage{mathtools, xfrac, bm, amsfonts, amssymb, amsmath, dsfont, stmaryrd, mathrsfs}
\let\ltxcup\cup
\let\ltxcap\cap
\let\ltxemptyset\emptyset
\let\ltxlangle\langle
\let\ltxrangle\rangle
\usepackage{mathabx}
\let\cup\ltxcup
\let\cap\ltxcap
\let\emptyset\ltxemptyset
\let\langle\ltxlangle
\let\rangle\ltxrangle
	
\usepackage[breakable]{tcolorbox}
\usepackage{amsthm}
	\newtheorem{dfn}{Definition}
	\newtheorem{lem}{Lemma}
	
\usepackage{xparse}
\usepackage{xstring}

\newcommand{\C}[1][-]{	\IfStrEqCase{#1}{%
							{-}{\mathbb{C}}%
							{-c}{\mathbb{C}_{\mu}}%
}[\mbox{\color{red}Use of illegal parameter in command \textbackslash C!}]}
\newcommand{\R}[1][-]{	\IfStrEqCase{#1}{%
							{-}{\mathbb{R}}%
							{-c}{\mathbb{R}_{\mu}}%
}[\mbox{\color{red}Use of illegal parameter in command \textbackslash R!}]}
\newcommand{\W}[1][-]{	\IfStrEqCase{#1}{%
							{-}{\mathbb{W}}%
							{-c}{\mathbb{W}_{\mu}}%
}[\mbox{\color{red}Use of illegal parameter in command \textbackslash W!}]}
\renewcommand{\AA}[1][-]{	\IfStrEqCase{#1}{%
							{-}{\mathcal{A}}%
							{-P}{\mathrm{P}\mathcal{A}}%
}[\mbox{\color{red}Use of illegal parameter in command \textbackslash AA!}]}
\newcommand{\BB}[1][-]{		\IfStrEqCase{#1}{%
							{-}{\mathcal{B}}%
							{-P}{\mathrm{P}\mathcal{B}}%
}[\mbox{\color{red}Use of illegal parameter in command \textbackslash BB!}]}
\newcommand{\LL}[1][-]{		\IfStrEqCase{#1}{%
							{-}{\mathcal{L}}%
							{-P}{\mathrm{P}\mathcal{L}}%
}[\mbox{\color{red}Use of illegal parameter in command \textbackslash LL!}]}
\newcommand{\QQ}[1][-]{		\IfStrEqCase{#1}{%
							{-}{\mathcal{Q}}%
							{-P}{\mathrm{P}\mathcal{Q}}%
}[\mbox{\color{red}Use of illegal parameter in command \textbackslash OO!}]}
\newcommand{\TT}[1][-]{		\IfStrEqCase{#1}{%
							{-}{\mathcal{T}}%
							{-P}{\mathrm{P}\mathcal{T}}%
}[\mbox{\color{red}Use of illegal parameter in command \textbackslash TT!}]}
\newcommand{\aVar}[1][-]{	\IfStrEqCase{#1}{%
							{-}{a}%
							{-t}{\hspace{0.75pt}\widetilde{a}}
							{-m}{\mathrm{a}}%
							{-b}{\bm{a}}%
							{-mb}{\bm{\mathrm{a}}}%
							{+}{A}%
							{+m}{\mathrm{A}}%
							{+b}{\bm{A}}%
							{+mb}{\bm{\mathrm{A}}}%
}[\mbox{\color{red}Use of illegal parameter in command \textbackslash bVar!}]}
\newcommand{\cVar}[1][-]{	\IfStrEqCase{#1}{%
							{-}{c}%
							{-m}{\mathrm{c}}%
							{-b}{\bm{c}}%
							{-mb}{\bm{\mathrm{c}}}%
							{+}{C}%
							{+m}{\mathrm{C}}%
							{+b}{\bm{C}}%
							{+mb}{\bm{\mathrm{C}}}%
}[\mbox{\color{red}Use of illegal parameter in command \textbackslash cVar!}]}
\newcommand{\dVar}[1][-]{	\IfStrEqCase{#1}{%
							{-}{d}%
							{-m}{\mathrm{d}}%
							{-b}{\bm{d}}%
							{-mb}{\bm{\mathrm{d}}}%
							{+}{D}%
							{+m}{\mathrm{D}}%
							{+b}{\bm{D}}%
							{+mb}{\bm{\mathrm{D}}}%
}[\mbox{\color{red}Use of illegal parameter in command \textbackslash dVar!}]}
\newcommand{\eVar}[1][-]{	\IfStrEqCase{#1}{%
							{-}{e}%
							{-m}{\mathrm{e}}%
							{-f}{\mathfrak{e}}%
							{-b}{\bm{e}}%
							{-mb}{\bm{\mathrm{e}}}%
							{+}{E}%
							{+m}{\mathrm{E}}%
							{+b}{\bm{E}}%
							{+mb}{\bm{\mathrm{E}}}%
}[\mbox{\color{red}Use of illegal parameter in command \textbackslash eVar!}]}
\newcommand{\fVar}[1][-]{	\IfStrEqCase{#1}{%
							{-}{f}%
							{-w}{\widetilde{f}}%
							{-m}{\mathrm{f}}%
							{-b}{\bm{f}}%
							{-mb}{\bm{\mathrm{f}}}%
							{+}{F}%
							{+m}{\mathrm{F}}%
							{+b}{\bm{F}}%
							{+mb}{\bm{\mathrm{F}}}%
}[\mbox{\color{red}Use of illegal parameter in command \textbackslash fVar!}]}
\newcommand{\gVar}[1][-]{	\IfStrEqCase{#1}{%
							{-}{g}%
							{-c}{\check{g}}%
							{-m}{\mathrm{g}}%
							{-f}{\mathfrak{g}}%
							{-b}{\bm{g}}%
							{-mb}{\bm{\mathrm{g}}}%
							{-w}{\widetilde{g}}%
							{+}{G}%
							{+m}{\mathrm{G}}%
							{+b}{\bm{G}}%
							{+mb}{\bm{\mathrm{G}}}%
							{+w}{\widetilde{G}}%
}[\mbox{\color{red}Use of illegal parameter in command \textbackslash gVar!}]}
\newcommand{\hVar}[1][-]{	\IfStrEqCase{#1}{%
							{-}{h}%
							{-m}{\mathrm{h}}%
							{-b}{\bm{h}}%
							{-mb}{\bm{\mathrm{h}}}%
							{+}{H}%
							{+m}{\mathrm{H}}%
							{+b}{\bm{H}}%
							{+mb}{\bm{\mathrm{H}}}%
}[\mbox{\color{red}Use of illegal parameter in command \textbackslash hVar!}]}
\newcommand{\qVar}[1][-]{	\IfStrEqCase{#1}{%
							{-}{q}%
							{-c}{\check{q}}
							{-f}{\mathfrak{q}}
							{-w}{\widetilde{q}}%
							{-m}{\mathrm{q}}%
							{-b}{\bm{q}}%
							{-mb}{\bm{\mathrm{q}}}%
							{+}{Q}%
							{+m}{\mathrm{Q}}%
							{+bc}{\bm{\check{Q}}}
							{+b}{\bm{Q}}%
							{+mb}{\bm{\mathrm{Q}}}%
}[\mbox{\color{red}Use of illegal parameter in command \textbackslash qVar!}]}
\newcommand{\uVar}[1][-]{	\IfStrEqCase{#1}{%
							{-}{u}%
							{-m}{\mathrm{u}}%
							{-f}{\mathfrak{u}}%
							{-b}{\bm{u}}%
							{-ub}{\mathrlap{\underline{\bm{u}}}\phantom{\bm{u}}}%
							{-ubw}{\mathrlap{\underline{\bm{\widetilde{u}}}}\phantom{\bm{u}}}%
							{-ubc}{\mathrlap{\underline{\bm{\check{u}}}}\phantom{\bm{u}}}%
							{-mb}{\bm{\mathrm{u}}}%
							{+}{U}%
							{+m}{\mathrm{U}}%
							{+f}{\mathfrak{U}}%
							{+b}{\bm{U}}%
							{+bw}{\bm{\widetilde{U}}}%
							{+bc}{\bm{\check{U}}}%
							{+ub}{\mathrlap{\underline{\bm{U}}}\phantom{\bm{U}}}%
							{+mb}{\bm{\mathrm{U}}}%
}[\mbox{\color{red}Use of illegal parameter in command \textbackslash uVar!}]}
\newcommand{\vVar}[1][-]{	\IfStrEqCase{#1}{%
							{-}{v}%
							{-h}{\hat{v}}%
							{-m}{\mathrm{v}}%
							{-b}{\bm{v}}%
							{-ub}{\mathrlap{\underline{\bm{v}}}\phantom{\bm{v}}}%
							{-ubc}{\mathrlap{\underline{\bm{\check{v}}}}\phantom{\bm{v}}}%
							{-mb}{\bm{\mathrm{v}}}%
							{+}{V}%
							{+m}{\mathrm{V}}%
							{+b}{\bm{V}}%
							{+bc}{\bm{\check{V}}}%
							{+ub}{\mathrlap{\underline{\bm{V}}}\phantom{\bm{V}}}%
							{+mb}{\bm{\mathrm{V}}}%
}[\mbox{\color{red}Use of illegal parameter in command \textbackslash vVar!}]}
\newcommand{\wVar}[1][-]{	\IfStrEqCase{#1}{%
							{-}{w}%
							{-m}{\mathrm{w}}%
							{-b}{\bm{w}}%
							{-ub}{\mathrlap{\underline{\bm{w}}}\phantom{\bm{w}}}%
							{-mb}{\bm{\mathrm{w}}}%
							{+}{W}%
							{+m}{\mathrm{W}}%
							{+b}{\bm{W}}%
							{+ub}{\mathrlap{\underline{\bm{W}}}\phantom{\bm{W}}}%
							{+mb}{\bm{\mathrm{W}}}%
}[\mbox{\color{red}Use of illegal parameter in command \textbackslash wVar!}]}
\newcommand{\xVar}[1][-]{	\IfStrEqCase{#1}{%
							{-}{x}%
							{-w}{\widetilde{x}}%
							{-m}{\mathrm{x}}%
							{-u}{\mathrlap{\underline{x}}\phantom{x}}%
							{-o}{\overline{x}}%
							{-ub}{\mathrlap{\underline{\bm{x}}}\phantom{\bm{x}}}%
							{-b}{\bm{x}}%
							{-mb}{\bm{\mathrm{x}}}%
							{+}{X}%
							{+m}{\mathrm{X}}%
							{+b}{\bm{X}}%
							{+mb}{\bm{\mathrm{X}}}%
}[\mbox{\color{red}Use of illegal parameter in command \textbackslash xVar!}]}
\newcommand{\yVar}[1][-]{	\IfStrEqCase{#1}{%
							{-}{y}%
							{-m}{\mathrm{y}}%
							{-b}{\bm{y}}%
							{-mb}{\bm{\mathrm{y}}}%
							{+}{Y}%
							{-o}{\overline{y}}%
							{+c}{\mathcal{Y}}%
							{+m}{\mathrm{Y}}%
							{+b}{\bm{Y}}%
							{+mb}{\bm{\mathrm{Y}}}%
}[\mbox{\color{red}Use of illegal parameter in command \textbackslash yVar!}]}
\newcommand{\N}[1][-]{		\IfStrEqCase{#1}{%
							{-}{\mathbb{N}}%
							{i}{n}%
							{ii}{m}%
							{iii}{l}%
							{iv}{k}%
							{v}{j}%
							{I}{N}%
							{II}{M}%
							{III}{L}%
							{IV}{K}%
							{V}{J}%
							{sI}{\mathsf{N}}%
							{sII}{\mathsf{M}}%
							{sIII}{\mathsf{L}}%
							{sIV}{\mathsf{K}}%
							{sV}{\mathsf{J}}%
}[\mbox{\color{red}Use of illegal parameter in command \textbackslash \N!}]}
\newcommand{\ket}[1]{|#1\rangle}
\newcommand{\bra}[1]{\langle#1|}
\newcommand{\USet}[1][-]{	\IfStrEqCase{#1}{%
							{-}{\mathrm{U}}%
							{-c}{\mathbb{U}}%
}[\mbox{\color{red}Use of illegal parameter in command \textbackslash USet!}]}
\newcommand{\UpLo}[3]{\mathrlap{#1^{#2}}\phantom{#1}_{#3}}

\def\BibTeX{{\rm B\kern-.05em{\sc i\kern-.025em b}\kern-.08em
    T\kern-.1667em\lower.7ex\hbox{E}\kern-.125emX}}
    
\begin{document}

\title{Feynman Meets Turing:\\ The Curse of Quantum Universality\\
}

\author{%
	Yannik N. B{\"o}ck\hspace{1pt}\(\vphantom{|}^{\ast}\)\qquad%
	Holger Boche\hspace{1pt}\(\vphantom{|}^{\ast}\)\(\vphantom{|}^{\star}\)\qquad%
	Frank H.P. Fitzek\hspace{1pt}\(\vphantom{|}^{\dagger}\)\(\vphantom{|}^{\ddagger}\)
\thanks{\(\vphantom{|}^{\ast}\)\hspace{1pt}Technical University of Munich, School of Computation, Information and Technology,
	Department of Computer Engineering, Chair of Theoretical Information Technology, D-80333 Munich, Germany
	{\tt\small \{yannik.} {\tt\small boeck, boche\}@tum.de}\vspace{0.25\baselineskip}}%
\thanks{\(\vphantom{|}^{\star}\)\hspace{1pt}BMBF Research Hub 6G-life Munich, D-80333; Munich Center for Quantum Science and Technology (MCQST), D-80799; 
	Munich Quantum Valley (MQV) D-80807\vspace{0.25\baselineskip}}%
\thanks{\(\vphantom{|}^{\dagger}\)\hspace{1pt}Telekom Chair of Communication Networks, Technische Universität Dresden (TUD), D-01062 Dresden,
	Germany {\tt\small frank.fitzek} {\tt\small@tu-dresden.de}\vspace{0.25\baselineskip}}%
\thanks{\(\vphantom{|}^{\ddagger}\)\hspace{1pt}BMBF Research Hub 6G-life Dresden, D-01062; TUD Cluster of Excellence “Centre for Tactile Internet with Human-in-the-Loop (CeTI),” D-01062;
	TUD 5G Lab Germany, D-01062\vspace{0.1\baselineskip}}%
}

\maketitle

\begin{abstract}
	We consider a formal model of quantum circuit description languages (QCDLs) in which semantically meaningful programs correspond to computable unitary matrices. 
	We show that any semantically universal QCDL -- that is, any QCDL able to describe all computable unitary matrices, which in turn 
	form the set of matrices we can meaningfully represent on digital hardware -- cannot have a semi-decidable set of semantically meaningful descriptions. 
	In particular, no such language admits a compiler that reliably recognizes all valid program descriptions. This result stands in contrast to classical programming languages. 
	While compilation in languages such as C or C++ may itself involve non-terminating computations, the set of semantically meaningful programs remains recursively enumerable, 
	since successful compilation provides a witness of validity. The essential difference lies in the nature of the semantic domains: classical languages 
	describe partial recursive functions, whereas QCDLs describe total unitary operators. Our analysis establishes a fundamental limitation of quantum circuit description 
	languages and highlights a structural distinction between classical and quantum models of computation at the level of formal language theory.
\end{abstract}

\begin{IEEEkeywords}
	Chomsky hierarchy, quantum circuit description language, quantum circuits, quantum machine language, universal quantum computing.
\end{IEEEkeywords}

\section{Introduction}\label{sec:Introduction}
	In digital computing, formal programming languages provide the interface between the conceptual human understanding of algorithms and
	and executable machine code. A program is typically represented as a finite word over some alphabet, whose syntactic correctness is determined by a formal grammar 
	and whose semantics is given by a mapping into a suitable class of computational objects. This paradigm is well understood in the 
	classical setting, where programming languages describe, for example, Boolean circuits or partial recursive functions.

	A key feature of classical programming languages is that syntactic and semantic validation can be effectively implemented. Compilation 
	itself may involve complex or even non-terminating procedures, as is the case for languages such as C++ with unrestricted template 
	metaprogramming. Nevertheless, successful compilation provides always constitutes a \emph{witness} of the compiler output forming a valid computational object.
	In simple terms, whenever the compiler terminates and signals successful compilation, the compiler's output is definitely a well-formed 
	executable program. Rephrased in the terminology of recursion theory, we say that the set of \emph{semantically meaningful} programs remains \emph{recursively enumerable}:
	albeit not practically feasible or useful, it is theoretically possible to implement a program that sequentially prints all possible successfully compiling source codes.  
	Notably, the recursive enumerability of semantically meaningful programs is closely tied to the fact that classical 
	computing allows for implementing \emph{partial functions}, i.e., functions undefined for some input values. 
	In other words, successful compilation does, in general, not provide any information about the generated program's runtime behavior -- it may or may not terminate
	when executed with a certain input value.

	Quantum circuit description languages (QCDLs), on the other hand, operate over a fundamentally different semantic domain. In the circuit model 
	of quantum computing, programs describe computable unitary matrices, which in turn characterize the time-evolution
	of some quantum system via matrix-vector multiplications. Accordingly, the semantic domain of circuit-model quantum computing 
	does not include a notion of partiality: a matrix is a linear mapping that is well-defined for \emph{all} vectors of a certain dimension; 
	conceptually, it is \emph{impossible} for a matrix-vector multiplication \emph{not} to terminate.
	
	As a heuristic picture, we can think of classical compiling as sidestepping the problem of partiality by moving it to the semantic domain.
	Indeed, the set of \emph{total} programs -- that is, programs that terminate for all possible inputs -- is \emph{not} recursively enumerable in the above sense.
	Since the semantic domain of circuit-model quantum computing does not allow for partiality, one may ask whether a similar phenomenon occurs in this case.  
	In this work, we formalize QCDLs as computable mappings from words over a finite alphabet to computable unitary matrices. Within this framework, 
	we show that semantic universality -- that is, the ability to describe \emph{all} possible computable unitary matrices 
	-- indeed implies recursive \emph{unenumerability}. Given any semantically universal QCDL, there does \emph{not} exist a program
	that sequentially lists all valid descriptions of quantum circuits.	Consequently, given any com-\linebreak piler of any semantically universal 
	QCDL, there must al-\linebreak ways\hfill exist\hfill undetectable\hfill ``garbage\hfill code''\hfill --\hfill source\hfill code\hfill that\hfill suc-\linebreak 
	cessfully\hfill compiles\hfill despite\hfill not\hfill describing\hfill any\hfill quantum\hfill circuit.~~
	
	Our findings reveal a fundamental difference between classical and quantum programming languages at the level of formal language theory. 
	While classical languages admit semi-decidable notions of validity, universal QCDLs necessarily do not. This highlights an intrinsic 
	limitation of quantum circuit description languages that is independent of implementation details or hardware constraints. Notably, 
	our results apply to semantic universality in any \emph{fixed} dimension \(\N[I]\geq 2\).
	Our theory does \emph{not} assume (but is nevertheless valid for) QCDLs that describe quantum circuits of arbitrary dimension.
	
	\subsection{Notation an Terminology}
		For generic sets \(\AA\) and \(\BB\), a \emph{partial function} \(\fVar:\AA\supseteq\rightarrow\BB\) is of the form \(\fVar:\dVar[+](\fVar)\rightarrow\BB\), 
		\(\dVar[+](\fVar)\subseteq\AA\). We call \(\dVar[+](\fVar)\) the \emph{domain} of \(\fVar\). If \(\dVar[+](\fVar) = \AA\), we call \(\fVar\) a \emph{total} 
		function. Since the inclusion \(\dVar[+](\fVar)\subseteq\AA\) includes the \emph{improper} case, every total function is also partial, but not vice versa.
		
		Consider \(\N[I]\in\N\) and generic functions \(\fVar_1:\AA_1\rightarrow\AA_{2},\ldots\) \(\ldots,\fVar_{\N[I]}:\AA_{\N[I]}\rightarrow\AA_{\N[I]+1}\).
		For \(\aVar\in\AA_1\), we define
		\begin{align*}	[\fVar_{\N[I]}\cdots\fVar_{1}](\aVar) := \fVar_{\N[I]}(\cdots\fVar_{1}(\aVar)\cdots),
		\end{align*}
		i.e., \([\fVar_{\N[I]}\cdots\fVar_{1}] :\AA_{1} \rightarrow \AA_{\N[I]+1}\) is the concatenation of the functions \(\fVar_1,\ldots,\fVar_{\N[I]}\).
		
		If an expression \(\mathrm{expr}({\cdot})\) of some form defines the elements of a sequence \((\aVar_{\N[i]})_{\N[i]\in\N}\),
		we employ the notation
		\begin{align*}	(\aVar_{\N[i]})_{\N[i]\in\N} : \aVar_{\N[i]} = \mathrm{expr}(\N[i])
		\end{align*}
		for the sequence's definition. For example, for \(\mathrm{expr}({\cdot}) \equiv \sqrt{{\cdot}}\), we may write 
		\((\aVar_{\N[i]})_{\N[i]\in\N} : \aVar_{\N[i]} = \sqrt{\N[i]}\).

\section{Preliminaries}
	Subsequently, we briefly delineate the relevant background from \emph{recursion theory} and \emph{computable analysis}
	For comprehensive treatments, we refer to~\cite{1987-So,1989-PoRi,2000-Ba,2000-We,2021-Br} and~\cite{2026-BoEA}.
		
	Our analysis builds upon the computing model of Turing machines and \(\mu\)-recursive functions, which we will formally introduce below.		
	\emph{Turing machines} form a mathematical abstraction of digital computing, i.e., they formalize our intuitive 
	understanding of \emph{algorithms}. According to the widely accepted \emph{Church-Turing thesis}~\cite[\emph{Author's Preface}]{1987-So},
	this formalization is definitive: If, in theory, we \emph{cannot} solve a mathematical problem by means of a Turing machine, we definitely 
	\emph{cannot} solve it on a real-world digital computer.

\subsection{Turing Machines and \(\mu\)-recursive Functions} 	
	The set of \emph{\(\mu\)-recursive functions}~\cite{1936-Kl} is the smallest set of partial functions \(\gVar:\N\times\cdots\times\N \supseteq\rightarrow\N\) that
	\begin{itemize}[leftmargin=*]
					\item contains the \emph{successor function}, all \emph{constant functions}, and all \emph{projection functions} and
					\item is closed with respect to \emph{composition}, \emph{primitive recursion}, and \emph{unbounded search}
	\end{itemize}
	(c.f.~\cite[Definition~2.1, p.~8, Definition~2.2, p.~10]{1987-So} for details). 
	We denote the set of \(\mu\)-recursive functions by \(\mathrlap{\fVar[+m]_{\hspace{-0.75pt}\mu}}\phantom{\fVar[+m]}^{\star}\).
	As indicated above, \(\mathrlap{\fVar[+m]_{\hspace{-0.75pt}\mu}}\phantom{\fVar[+m]}^{\star}\) coincides with the set of functions we can compute by means of a 
	\emph{universal Turing machine}~\cite{1936-Tu,1937-Tu-a} (see~\cite{1937-Tu-b} for the proof of the equivalence).
	
	The concept of a \(\mu\)-recursive function's domain has a dedicated interpretation in recursion theory. Computations (in the traditional sense) are 
	ultimately step-wise procedures. Given some input, the relevant procedure may or may not eventually reach a terminal state after a finite number of steps. 
	If it does -- and \emph{only} then --, the computation's result is \emph{well-defined}. Specifically, for \(\AA\subseteq\N\) arbitrary, the 
	informal statement 
	\begin{itemize}[leftmargin=*]
		\item	\emph{there exists an algorithm that, given some input \(\N[i]\in\N\), eventually terminates if and only if we have\(\N[i]\in\AA\)}
	\end{itemize}
	is formally equivalent to the statement
	\begin{itemize}[leftmargin=*]
		\item	\emph{there exists a \(\mu\)-recursive function \(\gVar : \N\supseteq\rightarrow\N\) with domain \(\dVar[+](\gVar) = \AA\).}
	\end{itemize}
	If such a \(\mu\)-recursive function exists, we call the set \(\AA\) \emph{recursively enumerable}. As a consequence of the 
	\emph{halting problem}~\cite[Chapter~I, Section~4, p.~18ff]{1987-So}, the set \(\N\setminus\dVar[+](\gVar)\) is \emph{not} 
	necessarily recursively enumerable. If both \(\AA\) and \(\N\setminus\AA\) are recursively enumerable, we call \(\AA\) a \emph{recursive} set. 
	Note that \(\AA\) is a recursive set if and only if
	\begin{align*} \mathds{1}_{\AA} : 	\N		\rightarrow	\N,~ 
										\N[i]	\mapsto		\begin{cases}	1,	&\text{if}~\N[i]\in\AA,\\
																			0,	&\text{otherwise}
															\end{cases}
	\end{align*}
	-- that is, the \emph{indicator function} of \(\AA\) --, is a \(\mu\)-recursive function. Finally, note that the following statements are \emph{equivalent}:
	\begin{itemize}[leftmargin=*]
		\item	there exists a \(\mu\)-recursive function \(\gVar:\N\supseteq\rightarrow\N\) such that \(\AA = \dVar[+](\gVar)\);
		\item	there exists a (total) \(\mu\)-recursive function \(\hVar:\N\rightarrow\N\) such that we have \(\AA  = \{\hVar(\N[i]) :  \N[i]\in\N\}\).
	\end{itemize}
	For a comprehensive analysis of recursively enumerable sets and their properties (such as the equivalences above), 
	we once more refer to~\cite{1987-So}. 

	In order to perform computations that involve abstract objects such as unitary matrices, we must represent those objects by means
	of our computers native ``information carrier''. In real-world digital computers, any data is ultimately represented by a bit-string in the computers memory. 
	In turn, a bit-string is nothing else than the base-2 expansion of a natural number. The formal study of abstract objects 
	in relation to their natural-number representatives leads to the notion of \emph{modest sets}. A modest set is a pair
	\((\AA_{\mu},\mathsf{N}_{\AA})\) consisting of a generic countable set \(\AA_{\mu}\) and a partial surjective mapping
	\begin{align*}	\mathsf{N}_{\AA} : \N \supseteq \rightarrow \AA_{\mu},
	\end{align*}
	which we refer to as a \emph{numbering} of \(\AA_{\mu}\). Given \(\aVar\in\AA_{\mu}\), a number \(\N[i]\in\N\) that satisfies \(\N[i]\in\dVar[+](\mathsf{N}_{\AA})\) 
	and \(\mathsf{N}_{\AA}(\N[i]) = \aVar\) is called a \emph{realizer} of \(\aVar\). Once suitable modest sets have been established, we can perform computations 
	with their elements through algorithmic manipulation of their realizers. The interested reader may find a comprehensive exposition in~\cite{2000-Ba}.
	
	\pagebreak
	\begin{dfn}[Computable Functions and Sequences]\label{dfn:GenCSq}\mbox{}
		\begin{itemize}[leftmargin=*]
			\item	Let \((\AA_{\mu},\mathsf{N}_{\AA})\) and \((\BB_{\mu},\mathsf{N}_{\BB})\) be modest sets,
				\(\fVar[+] : \AA_{\mu}\supseteq \rightarrow \BB_{\mu}\) any partial mapping, and 
				\(\gVar : \N \supseteq\rightarrow \N\) some \(\mu\)-recursive function that satisfies
				\begin{align*} 	&\N[i]\in\dVar[+](\gVar) ~\text{and}~ [\fVar[+]\mathsf{N}_{\AA}](\N[i]) = [\mathsf{N}_{\BB}\gVar](\N[i])\quad\ldots\\
								&\qquad\ldots\quad \text{for all}~\N[ii] \in \{\N[i]\in\dVar[+](\mathsf{N}_{\AA}):\mathsf{N}_{\AA}(\N[i]) \in \dVar[+](\fVar[+])\}.
				\end{align*}
				We call \(\fVar[+]\) an \emph{\((\mathsf{N}_{\AA},\mathsf{N}_{\BB})\)-computable function}.
			\item	Let \((\AA_{\mu},\mathsf{N}_{\AA})\) be a modest set, \((\aVar_{\N[i]})_{\N[i]\in\N}\subset\AA_{\mu}\) 
				any sequence, and \(\hVar : \N\rightarrow\N\) some total \(\mu\)-recursive function that satisfies
				\begin{align*} 	\aVar_{\N[i]} = [\mathsf{N}_{\AA}\hVar](\N[i]) \qquad \text{for all}~\N[i]\in\N.
				\end{align*}
				We call \((\aVar_{\N[i]})_{\N[i]\in\N}\) an \emph{\(\mathsf{N}_{\AA}\)-computable sequence}.
		\end{itemize}
	\end{dfn}

	Whenever \(\fVar[+]\) is some \((\mathsf{N}_{\AA},\mathsf{N}_{\BB})\)-computable function and \(\mathsf{N}_{\AA}\) and \(\mathsf{N}_{\BB}\)
	are unambiguous from the context, we simply refer to \(\fVar[+]\) as \emph{computable}. The same analogously applies to computable sequences.

	Observe that computability in the sense of Definition~\ref{dfn:GenCSq} is invariant under \emph{Turing equivalence}. Given modest sets \((\AA_{\mu},\mathsf{N}_{\AA,1})\)
	and \((\AA_{\mu},\mathsf{N}_{\AA,2})\), we call \(\mathsf{N}_{\AA,1}\) and \(\mathsf{N}_{\AA,2}\) Turing equivalent if there exist 
	\(\mu\)-recursive functions
	\begin{align*}	\gVar_{1\leftarrow 2},\gVar_{2\leftarrow 1}:\N\supseteq\rightarrow\N
	\end{align*} 
	such that for all \(\N[i]\in\dVar[+](\mathsf{N}_{\AA,1})\) and all \(\N[ii]\in\dVar[+](\mathsf{N}_{\AA,2})\) we have 
	\begin{itemize}[leftmargin=*]
			\item	\(\N[i]\in \dVar[+](\gVar_{2\leftarrow 1})\) and \(\mathsf{N}_{\AA,1}(\N[i]) = {[}\mathsf{N}_{\AA,2}\gVar_{2\leftarrow 1}{]}(\N[i])\);
			\item	\(\N[ii]\in \dVar[+](\gVar_{1\leftarrow 2})\) and \(\mathsf{N}_{\AA,2}(\N[ii]) = {[}\mathsf{N}_{\AA,1}\gVar_{1\leftarrow 2}{]}(\N[ii])\).
	\end{itemize}
	Since \(\mathrlap{\fVar[+m]_{\hspace{-0.75pt}\mu}}\phantom{\fVar[+m]}^{\star}\) is closed under composition, Turing equivalence indeed 
	forms an equivalence relation on the set of numberings of a given set \(\AA_{\mu}\). Given a modest set \((\AA_{\mu},\mathsf{N}_{\AA})\),
	we denote the relevant Turing-equivalence class by \(\UpLo{[\mathsf{N}_{\AA}]}{\mathrm{T}}{\sim}\).
	
	Notably, there exists a unique Turing-equivalence class of \emph{universal \(\mu\)-recursive functions}, which consists of exactly those 
	\(\mu\)-recursive functions that represent \emph{universal Turing machines}. A \(\mu\)-recursive function
	\begin{align*}	\gVar : \N\times\N\rightarrow\N,~(\N[i],\N[ii])\mapsto \gVar(\N[i],\N[ii])
	\end{align*}
	is called universal if, for all \emph{unary} \(\mu\)-recursive functions \(\hVar:\N\supseteq\rightarrow\N\)
	-- that is, \(\mu\)-recursive functions in \emph{one} argument --,
	there exists a \emph{\(\gVar\)-index} \(\N[i]\in\N\) such that for all \(\N[ii]\in\N\), we have 
	\((\N[i],\N[ii])\in\dVar[+](\gVar)\) if and only if we have \(\N[ii]\in\dVar[+](\hVar)\)
	and, for all \(\N[ii]\in\dVar[+](\hVar)\),
	\begin{align*}	\gVar(\N[i],\N[ii]) = \hVar(\N[ii]).
	\end{align*}
	Denoting the set of unary \(\mu\)-recursive functions by \(\UpLo{F}{\star}{\hspace{-0.75pt}\mu,1}\), we may
	then define \(\mathsf{F}_{\smash{\gVar}} : \N \rightarrow \UpLo{F}{\star}{\hspace{-0.75pt}\mu,1}\)
	such that \(\mathsf{F}_{\gVar}(\N[i]) = \hVar\) if and only if \(\N[i]\) is a \(\gVar\)-index of \(\hVar\).
	As follows from the smn\emph{-Theorem}, the equivalence class \(\UpLo{[\mathsf{F}_{\gVar}(\N[i])]}{\mathrm{T}}{\sim}\)
	indeed contains all numberings of \(\UpLo{F}{\star}{\hspace{-0.75pt}\mu,1}\) obtained from universal \(\mu\)-recursive functions
	in this way, and may be extended to \(\UpLo{F}{\star}{\hspace{-0.75pt}\mu}\) via \emph{currying} (see~\cite[Theorem~3.5, p.~16]{1987-So} for details).

	With \(\gVar\) and \(\mathsf{F}_{\gVar}\) as above, we may understand the argument of \(\mathsf{F}_{\gVar}\) -- i.e., the first of the two arguments of \(\gVar\)
	as the theoretical equivalence of executable machine code in a real-world digital general-purpose computer. In coherence
	with our discussion from Section~\ref{sec:Introduction}, \(\mathsf{F}_{\gVar}\) is a \emph{total} numbering: even if for some \(\N[i]\in\N\),
	the function \(\gVar\) is undefined for all \((\N[i],\N[ii])\), \(\N[ii]\in\N\), we find that \(\mathsf{F}_{\gVar}(\N[i])\)
	is still well-defined. Precisely, \(\N[i]\) is a \(\gVar\)-index of the \emph{nowhere-defined function} in this case. 

\subsection{Formal Languages and Chomsky's Hierarchy}
	Throughout this work, we will consider an ordered alphabet \(\mathtt{A} := \{\mathtt{a}_{0},\ldots,\mathtt{a}_{\N[II]}\}\), 
	\(\mathtt{a}_{0} < \ldots < \mathtt{a}_{\N[II]}\), \(\N[II]\in\N\).
	Denoting the \emph{empty word} by \(\varepsilon\), we inductively define the \emph{Kleene closure}
	\begin{align*}	&\W 	:=  \bigcup_{\N[i]\in\N}{\textstyle\mathtt{A}^{\N[i]}},	~\text{where}~
					{\textstyle\mathtt{A}^{0}} := \{\varepsilon\}~\text{and}~\ldots\\
					&\qquad\ldots~{\textstyle\mathtt{A}^{\N[i] +1}} 
						:= 	\begin{cases}	\{(\mathtt{a}) : \mathtt{a}\in\mathtt{A}\},
												&\text{if}~\N[i] = 0,	\\
											\{(\omega,\mathtt{a}) : \mathtt{a}\in\mathtt{A},~\omega \in{\textstyle\mathtt{A}^{\N[i]}}\}, 
												&\text{otherwise}.
							\end{cases}
	\end{align*}
	Consequently, for all \(\omega\in\W\), there exists a unique \(\N[i]\in\N\) -- the \emph{word length} of \(\omega\) -- 
	such that \(\omega \in \mathtt{A}^{\N[i]}\), which we denote by \(|\omega|\).
	The \emph{concatenation} of words is the unique operation \(\circ: \W\times\W \rightarrow \W\), \((\omega_1,\omega_2)\mapsto\omega_1\circ\omega_2\), such that
	\begin{itemize}[leftmargin=*]
			\item for all \(\omega\in\W\), we have \(\varepsilon\circ\omega = \omega\circ\varepsilon = \omega\) 
				(i.e., \(\varepsilon\) is a two-sided \emph{identity element} of \(\circ\));
			\item for all \(\omega_1,\omega_2,\omega_3\in\W\), we have \(\omega_1\circ(\omega_2\circ\omega_3) = (\omega_1\circ\omega_2)\circ\omega_3\)
				(i.e., \(\circ\) is \emph{associative});
			\item for all \(\mathtt{b}_1,\mathtt{b}_2\in\mathtt{A}\), we have \((\mathtt{b}_1)\circ(\mathtt{b}_2) = ((\mathtt{b}_1),\mathtt{b}_2)\).
	\end{itemize}
	Accordingly, \(\W\) is the \emph{free monoid} (see, e.g,~\cite[Section~1.1, p.~1\hspace{2pt}f]{2022-Pe}) on \(\mathtt{A}\). Given \(\N[I]\in\N\), 
	\(\mathtt{b}_1,\ldots,\mathtt{b}_{\N[I]}\in\mathtt{A}\) we will usually write  \(\mathtt{b}_{1}\cdots\mathtt{b}_{\N[I]}\) instead of 
	\(((\ldots(\mathtt{b}_{\smash{1}})\ldots),\mathtt{b}_{\N[I]})\) for ease of notation. Similarly, for \(\omega_1,\omega_2\in\W\), we write 
	\(\omega_1\omega_2\) instead of \(\omega_1\circ\omega_2\). 
	
	Building upon the concatenation of words, we extend the order relation of the alphabet \(\mathtt{A}\) 
	to the monoid \(\W\) as follows.
	\begin{itemize}[leftmargin=*]
			\item For all \(\omega\in\W\), we require \(\omega < \mathtt{a}_0\omega < \ldots < \mathtt{a}_{\N[II]}\omega\).
			\item For all \(\mathtt{b}_1,\mathtt{b}_2\in\mathtt{A}\) and all \(\omega_1,\omega_1\in\W\) that satisfy \(\omega_1 < \omega_2\), 
				we require \(\mathtt{b}_1\omega_1 < \mathtt{b}_2\omega_2\).
	\end{itemize}
	The resulting well-order is unique and induces a total numbering 
	\(\mathsf{N}_{\W}:\N\rightarrow\W\) such that 
	\begin{align*}	\mathsf{N}_{\W}(0) 			= \varepsilon,	\quad
					\mathsf{N}_{\W}(\N[i]+1) 	= \min\big\{\omega\in\W: \mathsf{N}_{\W}(\N[i]) < \omega\big\}
	\end{align*}
	for all \(\N[i]\in\N\). The corresponding equivalence class \(\UpLo{[\mathsf{N}_{\W}]}{\mathrm{T}}{\sim}\) uniquely satisfies the following:
	\begin{itemize}[leftmargin=*]
			\item for all \(\mathsf{N}\in\UpLo{[\mathsf{N}_{\W}]}{\mathrm{T}}{\sim}\), we have \(\mathrm{img}(\mathsf{N}) = \W\);
			\item the functions \((\omega_1,\omega_2)\mapsto\omega_1\omega_2\) and \((\omega_1,\omega_2)\mapsto\mathds{1}_{\smash{\{\omega_1\}}}(\omega_2)\)
				are computable.
	\end{itemize}
	In addition, the word-length function \(|{\cdot}|:\W\rightarrow\N,~\omega\mapsto|\omega|\) is computable likewise. 	

	\begin{definition}	Given the Kleene closure \(\W\), a (\emph{formal}) \emph{language} is a subset \(\LL \subseteq \W\).
	\end{definition}

	We denote the \emph{power set} of \(\W\), by \(\wp(\W)\). Accordingly, \(\LL\) is a formal language if and only if we have \(\LL \in \wp(\W)\). 
	\emph{Chomsky's hierarchy} is a nested family 
	\begin{align*}
		\TT_{3}	\subset \TT_{2} \subset \TT_{1} \subset \TT_{0} \subset \W
	\end{align*}
	of language classes, each of which corresponds to a certain \emph{type} of \emph{formal grammars}. In the literature, the formal grammars that correspond to 
	\(\TT_{\N[i]}\) are usually referred to as \emph{Type-}\(\N[i]\) \emph{grammars}, where \(\N[i]\in\{0,\ldots,3\}\). Specifically,
	\begin{itemize}[leftmargin=*]
		\item a formal grammar is of Type-\(2\) if and only if it is \emph{context-free};
		\item a formal grammar is of Type-\(0\) if and only if it is \emph{unrestricted}.
	\end{itemize}
	Despite being designated as ``unrestricted'', Type-\(0\) grammars do \emph{not} correspond to arbitrary subsets of \(\W\).
	Precisely, unrestricted grammars characterize the set recursively enumerable languages. In other words, we have \(\LL\in\TT_0\) if and only if 
	there exists a total \(\mu\)-recursive function \(\gVar:\N\rightarrow\N\) such that 
	\begin{align*}	\LL = \big\{{[}\mathsf{N}_{\W}\gVar{]}(\N[ii]):\N[ii]\in\N\big\}.
	\end{align*}
	Recursive languages -- that is, languages \(\LL\in\TT_0\) that satisfy \(\W\setminus\LL\in\TT_0\) -- do \emph{not} correspond to any grammar in 
	Chomsky's hierarchy. 
	However, upon defining
	\begin{align*}	\TT_{0.5} := \big\{\LL\in\TT_0 : \W\setminus\LL\in\TT_0\big\},
	\end{align*}
	we obtain \(\TT_{1} \subset \TT_{0.5} \subset \TT_{0}\). 
	Since other details of Chomsky grammars are not of relevance for our analysis, 
	we omit them for brevity. The reader may find a comprehensive introduction to Chomsky's hierarchy in~\cite{2022-Pe}.
	
	\subsection{Computable Analysis}
	For the sake of brevity, we adopt the formal framework of~\cite{2026-BoEA} -- including the relevant notation -- directly.
	\begin{dfn}[Computable Numbers and Computable Unitary Matrices]\mbox{}
		\begin{itemize}[leftmargin=*]
			\item \((\R_{\mu},\mathsf{N}_{\R})\) denotes the modest set of \emph{computable real numbers}~\cite[Definition~3, p.~522]{2026-BoEA};
			\item \((\C_{\mu},\mathsf{N}_{\C})\) denotes the modest set of \emph{computable complex numbers}~\cite[Definition~3, p.~522]{2026-BoEA};
			\item \((\USet_{\mu},\mathsf{N}_{\smash{\USet}})\) denotes the modest set of \emph{computable unitary matrices}~\cite[Definition~4, p.~522]{2026-BoEA}.
		\end{itemize}
		Since the dimension of \(\USet_{\mu}\) is arbitrary throughout all of our analysis, we suppress its explicit mention.
	\end{dfn} 
	
	Informally, a real number is computable if there exists an algorithm that prints the number's decimal expansion up to any desired number of digits.
	Among others, the set of computable real numbers contains all algebraic numbers, and the numbers \(e\) and \(\pi\). Computable complex numbers
	are complex numbers whose real and imaginary parts are computable real numbers. Finally, computable unitary matrices are unitary matrices
	whose entries are computable complex numbers.
	
	The equivalence classes \(\UpLo{[\mathsf{N}_{\R}]}{\mathrm{T}}{\sim}\), \(\UpLo{[\mathsf{N}_{\C}]}{\mathrm{T}}{\sim}\), and
	\(\UpLo{[\mathsf{N}_{\smash{\USet}}]}{\mathrm{T}}{\sim}\) are operationally unique in making the algebraic axioms of
	\(\R\), \(\C\), and \(\USet\) effective. Concerning \(\R\) and \(\C\), we refer to~\cite{AR-1999-He,2021-Br} for details.
	Concerning the computable unitary matrices, we briefly summarize the relevant axioms (details may be found in~\cite{2026-BoEA}):
	\(\UpLo{[\mathsf{N}_{\smash{\USet}}]}{\mathrm{T}}{\sim}\) is the unique equivalence class of numberings such that  
	\begin{enumerate}	[leftmargin=*, itemsep=0.2\baselineskip, topsep=0.2\baselineskip]
						\item\label{item:def::StructEff:GrpOp} 
							the \emph{group operation} \(\mathrm{gop}\vphantom{|}_{\USet} : \USet_{\mu}\times\USet_{\mu} \rightarrow \USet_{\mu}\)
							defined through the matrix multiplication
							\begin{align*}	\uVar[+b]\vVar[+b] =: \mathrm{gop}\vphantom{|}_{\USet}(\uVar[+b],\vVar[+b])
							\end{align*}
							is \(\mathsf{N}_{\smash{\USet}}\)-computable 
							(with \(\dVar[+](\mathrm{gop}\vphantom{|}_{\smash{\USet}}) = \USet_{\mu}\times\USet_{\mu}\));
						\item\label{item:def::StructEff:Norm} 
							the \emph{metric} \(\dVar_{\smash{\USet}} : \USet_{\mu}\times\USet_{\mu} \rightarrow \{\xVar\in\R[-c]:\xVar\geq0\}\)
							defined through the matrix norm
							\begin{align*}	\|\uVar[+b] - \vVar[+b]\| =: \dVar_{\smash{\USet}}(\uVar[+b],\vVar[+b])
							\end{align*}
							is \(\mathsf{N}_{\smash{\USet}}\)-computable 
							(with \(\dVar[+](\dVar_{\smash{\USet}}) = \USet_{\mu}\times\USet_{\mu}\));
						\item\label{item:def::StructEff:Lim} 
							\emph{effective Cauchy limits} 
							\(	((\uVar[+b]_{\N[i]})_{\N[i]\in\N},\nu) 
								\mapsto{\lim}_{\N[i]\to\infty} \uVar[+b]_{\N[i]}
							\)
							are \(\mathsf{N}_{\smash{\USet}}\)-computable,
							where \( (\uVar[+b]_{\N[i]})_{\N[i]\in\N} \subset \USet_{\mu}\) is a computable
							se-\linebreak quence of unitary matrices and \(\nu:\N\rightarrow\N\)
							is a total \(\mu\)-recursive function such that
							\begin{align*}	\Big\|\uVar[+b]_{\N[iii]+\nu(\N[I])} - \uVar[+b]_{\N[iv]+\nu(\N[I])}\Big\|
											<\frac{1}{2^{\N[I]}}
							\end{align*}
							for all \(\N[iii],\N[iv],\N[I]\in\N\);
						\item\label{item:def::StructEff:BsProj} 
							for all \(1\leq\N[iii],\N[iv]\leq\N[I]\), \(\N[I]\in\N\), the \emph{basis-projection func-\linebreak tional}
							\(\beta_{\N[iii],\N[iv]} : \USet_{\mu}(\N[I]) \rightarrow \C[-c]\) defined through
							\begin{align*}	\beta_{\N[iii],\N[iv]}(\uVar[+b]) := \bra{\eVar[-b]_{\N[iii]}}\uVar[+b]\ket{\eVar[-b]_{\N[iv]}}
							\end{align*}
							is \(\mathsf{N}_{\smash{\USet}}\)-computable 
							(with \(\dVar[+](\beta_{\N[iii],\N[iv]}) = \USet_{\mu}(\N[I])\)).			
	\end{enumerate}
	
	Finally, we establish two technical lemmas. These will be required to prove our main contribution:
	Theorem~\ref{thm:noUniversalQCDL} in Section~\ref{sec:Main}. 
	
	\begin{lem}\label{lem:UnitGrpNotEnum}
			Let \(\gamma : [0,1]\cap\R_{\mu} \rightarrow \USet_{\mu}\) be any Banach-Mazur computable
			function that satisfies \(\gamma(0)\neq\gamma(1)\). There exists \(\xVar\in [0,1]\cap\R_{\mu}\) such that
			\(\gamma(\xVar) \in\USet_{\mu}\setminus\{\wVar[+b]_{\N[i]}:\N[i]\in\N\}\).
	\end{lem}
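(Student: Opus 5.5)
The plan is to argue by contradiction and reduce everything to an effective intermediate value argument on a single real-valued function. Throughout, \((\wVar[+b]_{\N[i]})_{\N[i]\in\N}\) denotes an arbitrary \(\mathsf{N}_{\smash{\USet}}\)-computable sequence of computable unitary matrices. Suppose, contrary to the claim, that \(\gamma(\xVar)\in\{\wVar[+b]_{\N[i]}:\N[i]\in\N\}\) for every \(\xVar\in[0,1]\cap\R_{\mu}\).

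\textbf{Step 1 (scalarize).} Define \(\fVar(\xVar):=\dVar_{\smash{\USet}}(\gamma(\xVar),\gamma(0))\). Since the metric \(\dVar_{\smash{\USet}}\) is computable (axiom~\ref{item:def::StructEff:Norm}), \(\fVar\) is a Banach-Mazur computable function \([0,1]\cap\R_{\mu}\rightarrow\R_{\mu}\). It satisfies \(\fVar(0)=0\) and \(\fVar(1)=:\cVar>0\). Under the contradiction hypothesis, every value of \(\fVar\) lies in the set \(\{\yVar_{\N[i]}:\N[i]\in\N\}\), where \(\yVar_{\N[i]}:=\dVar_{\smash{\USet}}(\wVar[+b]_{\N[i]},\gamma(0))\). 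This is a computable sequence of computable reals.

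\textbf{Step 2 (diagonal target value).} By the standard Cantor-type diagonalization against a computable sequence of reals, construct a computable real \(\yVar\in(0,\cVar)\) with \(\yVar\neq\yVar_{\N[i]}\) for all \(\N[i]\). Concretely, build nested rational intervals inside \((\cVar/3,2\cVar/3)\). At stage \(\N[i]\), approximate \(\yVar_{\N[i]}\) well enough to choose one of two disjoint closed thirds of the current interval that excludes \(\yVar_{\N[i]}\); this choice needs only finite precision, so it is effective. Consequently \(\yVar\) is never a value of \(\fVar\) on computable reals.

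\textbf{Step 3 (effective bisection).} Run bisection on \([\aVar_0,\bVar_0]=[0,1]\), maintaining \(\fVar(\aVar_{\N[i]})<\yVar<\fVar(\bVar_{\N[i]})\). At each midpoint \(\mVar\) (a rational), \(\fVar(\mVar)\) is a computable real and \(\fVar(\mVar)\neq\yVar\). Hence the comparison of \(\fVar(\mVar)\) with \(\yVar\) terminates, by searching approximations until they separate. We keep the half that preserves the invariant. This yields computable sequences \((\aVar_{\N[i]})\) and \((\bVar_{\N[i]})\) that converge effectively, with modulus \(2^{-\N[i]}\), to a computable real \(\xVar\in[0,1]\).

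\textbf{Step 4 (continuity and contradiction).} Invoke Mazur's theorem: a Banach-Mazur computable function is continuous at computable points along effectively convergent computable sequences. Then \(\fVar(\xVar)=\lim\fVar(\aVar_{\N[i]})\le\yVar\) and \(\fVar(\xVar)=\lim\fVar(\bVar_{\N[i]})\ge\yVar\), so \(\fVar(\xVar)=\yVar\). This contradicts Step 2, since \(\yVar\) is not attained by \(\fVar\) on computable reals.

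The main obstacle is Step 4, namely justifying that Banach-Mazur computability of \(\gamma\), and hence of \(\fVar\), gives sequential continuity along computable sequences. I would cite Mazur's theorem for real-valued functions and transfer it to \(\fVar\) through the computable metric. The essential trick that makes Step 3 effective is that the contradiction hypothesis itself renders the comparisons decidable.
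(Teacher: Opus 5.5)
Your proof is correct. It takes a self-contained route, whereas the paper gives no argument of its own here: it obtains the lemma by rerunning the proof of Theorem~1 in \cite{2026-BoEA}, with the specific sequence used there replaced by an arbitrary computable sequence of unitaries.

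You instead reduce everything to the single real-valued function $f(x)=\|\gamma(x)-\gamma(0)\|$, in three moves:
\begin{enumerate}
\item A Cantor-type diagonalization against the computable sequence $y_i=\|w_i-\gamma(0)\|$ yields a computable target $y\in(0,f(1))$. Under the contradiction hypothesis, $f$ cannot attain $y$ at computable points.
\item The same hypothesis guarantees $f(m)\neq y$ at every dyadic midpoint, so the bisection comparisons terminate.
\item Mazur's theorem then forces $f(x)=y$ at the computable limit point, which is the contradiction.
\end{enumerate}

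What your route buys is visibility into exactly what the paper's one-line proof asks the reader to accept. It shows that nothing about $(w_i)$ beyond its computability is used; it enters only through the computability of $(y_i)$. It also isolates the two genuinely non-trivial ingredients: continuity of Banach--Mazur computable functions along effectively convergent computable sequences, and an intermediate-value step made effective by the hypothesis itself. The paper's route buys brevity and reuse of an existing argument.

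Two small things to tighten in a write-up. First, say explicitly that the bisection is a single algorithm because Banach--Mazur computability, applied to a computable enumeration of the dyadic rationals in $[0,1]$, gives approximations of $f(m)$ uniformly in $m$. Second, no ``transfer'' is needed in Step~4: by Step~1, $f$ is itself a Banach--Mazur computable real function, so Mazur's theorem applies to $f$ directly. In fact only its sequential form along $(a_n)$ and $(b_n)$ is needed, and that has a short direct proof via a recursively enumerable non-recursive set.
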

	
	In other words, the computable unitary group is \emph{not} recursively enumerable.
	
	\begin{proof}[Proof of Lemma~\ref{lem:UnitGrpNotEnum}]
		The claim is a modified variant of~\cite[Theorem 1, p.~524]{2026-BoEA}
		obtained through replacing \((\wVar[+b]_{\N[i]})_{\N[i]\in\N}\) in~\cite[Appendix B, Proof of Theorem~1, p.~4\hspace{2pt}ff]{2026-BoEA} 
		by an \emph{arbitrary} computable sequence of unitary matrices.	
	\end{proof}
	
	\begin{lem}\label{lem:Connectedness}
		Let \(\uVar[+b]_0,\uVar[-ub]_1\in\USet_{\mu}\) be arbitrary. There exists a Banach-Mazur computable function
		\(\gamma : [0,1]\cap\R_{\mu} \rightarrow \USet_{\mu}\) that satisfies \(\gamma(0)=\uVar[+b]_0\) and \(\gamma(1) = \uVar[+b]_1\).
	\end{lem}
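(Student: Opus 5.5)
Since the unitary group \(\USet(\N[I])\) is path-connected, the natural route is to write down an explicit, effectively parametrized path from \(\uVar[+b]_0\) to \(\uVar[+b]_1\). Set \(\vVar[+b] := \uVar[+b]_0^{\dagger}\uVar[+b]_1\), which is again a computable unitary matrix: adjoints are computable entrywise via the basis-projection functionals and complex conjugation, and products are computable by the group operation. The plan is then to construct a computable path \(\delta\) from \(\mathbf{1}\) to \(\vVar[+b]\) and to set \(\gamma(\xVar) := \uVar[+b]_0\,\delta(\xVar)\). Because left multiplication by the fixed computable matrix \(\uVar[+b]_0\) is computable, \(\gamma\) inherits Banach-Mazur computability from \(\delta\), and the endpoint conditions transfer immediately.

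For \(\delta\), I would avoid the matrix logarithm. Computing \(\log\vVar[+b]\) via spectral decomposition is not effective in general, since eigenvalue multiplicities and branch cuts are undecidable. Instead, I would use a construction that needs no diagonalization. Write \(\vVar[+b] = \exp(\mathrm{i}\hVar[+b])\) only implicitly, and replace it with a product of elementary factors. Every unitary can be written as a finite product of two-level unitaries (Givens-type rotations) times a diagonal phase matrix, and each two-level \(2\times 2\) unitary lies on an explicit one-parameter path built from \(\cos(\xVar\theta)\), \(\sin(\xVar\theta)\) and phases \(e^{\mathrm{i}\xVar\varphi}\). The angles would then be needed as computable reals.

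The main obstacle is exactly here. Extracting the angles \(\theta,\varphi\) from the entries of \(\vVar[+b]\) requires case distinctions, such as whether an entry is zero, and these are not decidable for computable reals. Two ways around this suggest themselves.

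The first is to exploit nonuniformity. The lemma only asserts the \emph{existence} of \(\gamma\) for each fixed pair, so any finite amount of non-computable information, such as which entries vanish, can be hard-coded. The angles are then computable reals, obtained via computable \(\arg\) and \(\arccos\) on the domains where they are applied. I would expect this to suffice, because the dimension is fixed and the decomposition involves only finitely many decisions.

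The second, kept as a fallback, is to take a path that avoids decisions altogether. Use \(\delta(\xVar) := \exp\big(\xVar\,\hVar[+b]\big)\) with \(\hVar[+b]\) a computable anti-Hermitian logarithm obtained nonuniformly. To get \(\hVar[+b]\), pick a computable \(\lambda\) on the unit circle that is not an eigenvalue of \(\vVar[+b]\); this exists because there are only finitely many eigenvalues, and it can be hard-coded. Then define the logarithm by holomorphic functional calculus with the branch cut through \(\lambda\), computed via a Cauchy integral or power series. The matrix exponential would be computed as an effective Cauchy limit of its Taylor partial sums, using the limit axiom~\ref{item:def::StructEff:Lim}. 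Partial sums are not unitary, so I would either renormalize them by polar decomposition or argue within the ambient matrix space before noting that the limit is unitary.

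Finally, I would check Banach-Mazur computability. A computable sequence of reals \(\xVar_{\N[i]}\) yields a computable sequence \(\gamma(\xVar_{\N[i]})\), because all of the operations above are uniform in \(\xVar\).
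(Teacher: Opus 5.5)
Your argument is correct, but it takes a different route from the paper. The paper proves Lemma~\ref{lem:Connectedness} in one line by citing \cite[Proposition~1]{2026-BoEA}, whereas you construct the path yourself.

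Your reduction \(\gamma(x)=\bm{U}_0\,\delta(x)\), with \(\delta\) running from \(\mathbf{1}\) to \(\bm{V}=\bm{U}_0^{\dagger}\bm{U}_1\), is sound, and your first route already finishes the proof:
\begin{itemize}
\item The two-level decomposition of \(\bm{V}\) needs only finitely many zero/non-zero decisions. These may be hard-coded, because the lemma is an existence statement for a fixed pair.
\item After that, the finitely many angles are fixed computable reals, obtained via \(\arccos\) and a branch of \(\arg\) that is continuous at the relevant nonzero entry.
\item The entries of \(\delta(x)\) are then fixed polynomial expressions, with computable coefficients, in \(\cos(x\theta_j)\), \(\sin(x\theta_j)\) and \(e^{\mathrm{i}x\varphi_j}\).
\end{itemize}

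The logarithm route is therefore redundant. If you keep it, drop the power-series option and rely on the Cauchy integral: a power series for \(\log\) about any center converges at \(\bm{V}\) only if the spectrum of \(\bm{V}\) lies in an open half-circle, which fails in general.

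In both routes you tacitly use one further fact: a unitary whose entries are computed uniformly from a realizer of \(x\) has an \(\mathsf{N}_{\smash{\USet}}\)-realizer computable uniformly in \(x\). The limit axiom alone does not give this, as you noticed for the non-unitary partial sums. The fact is not among the four listed axioms, but it holds for the entrywise representation underlying \(\USet_{\mu}\), so state it explicitly.

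Compared with the citation, your construction makes explicit where nonuniformity enters, and it gives slightly more than the lemma asks. Your \(\gamma\) is computable on realizers in the sense of Definition~\ref{dfn:GenCSq}, which implies Banach--Mazur computability. The citation buys brevity, and Banach--Mazur computability is all that Lemma~\ref{lem:UnitGrpNotEnum} and Theorem~\ref{thm:noUniversalQCDL} use.
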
\begin{proof} 
		The claim is an immediate consequence of~\cite[Proposition~1, p.~523\hspace{2pt}f]{2026-BoEA}.
	\end{proof}

\section{The Curse of Quantum Universality}\label{sec:Main}
	Consider a modest set \((\AA_{\mu},\mathsf{N}_{\smash{\AA}})\). A \emph{description language} is a pair \((\LL_{\AA},\fVar[+]_{\AA})\), where 
	\(\LL_{\AA}\subseteq\W\) is a formal language and \(\fVar[+]_{\AA}:\W\supseteq\rightarrow\AA_{\mu}\) is a computable function that satisfies 
	\(\dVar[+](\fVar[+]_{\AA}) \subseteq \LL_{\AA}\). Moreover, we refer to words
	\begin{itemize}[leftmargin=*]
				\item \(\omega\in\LL_{\AA}\) as \emph{syntactically well-formed},
				\item \(\omega\in\W\setminus\LL_{\AA}\) as \emph{syntactically ill-formed},
				\item \(\omega\in\dVar[+](\fVar[+]_{\AA})\) as \emph{semantically meaningful}, and
				\item \(\omega\in\W\setminus\dVar[+](\fVar[+]_{\AA})\) as \emph{semantically meaningless}. 
	\end{itemize}
	The set \(\AA_{\mu}\) is the language's \emph{semantic domain} and, whenever we have \(\mathrm{img}(\fVar[+]_{\AA}) = \AA_{\mu}\), we call \((\LL_{\AA},\fVar[+]_{\AA})\) 
	\emph{semantically universal}. Note that \(\dVar[+](\fVar[+]_{\AA})\) is itself a formal language.\linebreak
	In principle, we could thus restrict our analysis to \(\dVar[+](\fVar[+]_{\AA})\)
	and ignore the larger set \(\LL_{\AA}\) of ``merely'' syntactically well-formed words. The reason for including \(\LL_{\AA}\) in our definition of 
	description languages is a mere practical consideration. As we will see below, source code that is syntactically well-formed but semantically meaningless is
	part of many real-world programming languages. 

	In classical computing, relevant semantic domains include boolean functions and finite state automata -- arguably, VHDL is the most common description language 
	for these domains -- or, in an idealized sense, the \(\mu\)-recursive functions \(\mathrlap{\fVar[+m]_{\hspace{-0.75pt}\mu}}\phantom{\fVar[+m]}^{\star}\).
	For example, assume \(\mathtt{A}\) is the standard ASCII alphabet and consider
	\begin{align*}
		\omega :\equiv \big(~&\texttt{\#include <iostream>}\\
		&\texttt{int main() \{}\\
		&\qquad\texttt{int a, b;}\\
		&\qquad\texttt{std::cin >> a >> b;}\\
		&\qquad\texttt{std::cout << (a + b) << '\textbackslash n';}\\
		&\qquad\texttt{return 0;}\\
		&\texttt{\}}~\big).
	\end{align*}
	According to the C++20-standard~\cite{2020-CPP}, \(\omega\in\W\) is a syntactically well-formed word in the programming language C++. 
	The \(\mu\)-recursive function
	\begin{align*}
	{\textstyle \fVar[+]_{\smash{\text{C++}}}}(\omega) : \mathbb{N}\times\mathbb{N} \to \mathbb{N},~
	(\xVar,\yVar) \mapsto [{\textstyle \fVar[+]_{\smash{\text{C++}}}}(\omega)](\xVar,\yVar) = \xVar + \yVar
	\end{align*}
	is the image of \(\omega\) under the mapping \({\textstyle \fVar[+]_{\smash{\text{C++}}}} : \W \supseteq\rightarrow 
	\mathrlap{\fVar[+m]_{\hspace{-0.75pt}\mu}}\phantom{\fVar[+m]}^{\star}\).
	Accordingly, \(\omega\) is also semantically meaningful. In contrast,
	\begin{align*}
		&\texttt{\#include <iostream>}\\
		&\texttt{int main( \{}\\
		&\qquad\texttt{int a, b;}\\
		&\qquad\texttt{std::cin >> a >> b;}\\
		&\qquad\texttt{std::cout << (a + b) << '\textbackslash n';}\\
		&\qquad\texttt{return 0;}\\
		&\texttt{\}}
	\end{align*}
	is syntactically ill-formed due to mismatched parentheses in the second line, and thus also semantically meaningless.
	Finally, the word
	\begin{align*}
		&\texttt{template<int N>}\\
		&\texttt{struct Loop \{}\\
		&\quad\texttt{static constexpr int value =}~\ldots\\
		&\hspace{2cm}\ldots~\texttt{Loop<N + 1>::value;}\\
		&\texttt{\};}\\
		&\texttt{int x = Loop<0>::value;}
	\end{align*}
	is syntactically well-formed but semantically meaningless. While it conforms to the C++20-standard on a syntactical level,
	it would send an ``unrestricted'' compiler into an infinite loop via unbounded template instantiation. While the C++20-standard requires compiler 
	implementations to limit the maximum depth of template instantiations in some form, the actual limit is compiler specific and thus \emph{not}
	part of the language syntax itself.

	Consider any semantically universal description language \((\LL,\fVar[+])\) whose semantic domain is the \(\mu\)-recursive functions
	\(\mathrlap{\fVar[+m]_{\hspace{-0.75pt}\mu}}\phantom{\fVar[+m]}^{\star}\). In practice, semantically universal programming languages 
	usually satisfy additional syntactical properties. For example, such languages typically provide a constructive method for composing programs.
	Formally we expect that there exists a computable operation
	\begin{align*}	\cVar[-mb]_{\smash{\fVar[+]}} : \W\times\W\rightarrow\W,~(\omega_1,\omega_2)\mapsto \omega_1\hspace{2pt}\cVar[-mb]_{\smash{\fVar[+]}}\hspace{2pt}\omega_2 
	\end{align*}
	such that whenever there exist \(\mu\)-recursive functions \(\fVar_1,\fVar_2:\N\supseteq\rightarrow\N\) that satisfy
	\(\fVar_1 = \fVar[+](\omega_1)\) and \(\fVar_2 = \fVar[+](\omega_2)\), we have 
	\begin{align*}	&\fVar[+](\omega_1\hspace{2pt}\cVar[-mb]_{\smash{\fVar[+]}}\hspace{2pt}\omega_2) : \N\supseteq\rightarrow\N,~\ldots\\
					&\hspace{2cm}\ldots~\N[i]\mapsto {[}\fVar[+](\omega_1\hspace{2pt}\cVar[-mb]_{\smash{\fVar[+]}}\hspace{2pt}\omega_2){]}(\N[i]) = \fVar_1(\fVar_2(\N[i])).
	\end{align*}
	In other words, given the programs \(\omega_1\) and \(\omega_2\) that implement the functions \(\fVar_1\) and \(\fVar_2\), respectively, 
	we do \emph{not} need to ``manually'' find a new program if we want to implement the composition of \(\fVar_1\) and \(\fVar_2\); 
	instead, we can find such a program by means of some canonical method. 

	Syntactical properties of the above kind lead to \emph{effective universality}, which ultimately enable features such as \emph{transpiling}. 
	For our purposes, however, mere semantic universality is of primary relevance. Notably, to facilitate the process of compiling, the formal 
	grammars underlying real-world programming languages is often intentionally kept simple. For example, the ``basic'' structure of the language 
	C belongs to \(\TT_2\). Due to some compiler-related methods relevant for e.g., variable-name resolution, C is ultimately a 
	\(\TT_{0.5}\) language. Given \(\omega\in\W\), there exist three distinct possibilities regarding the compiler behavior:
	\begin{enumerate}[label=\arabic*),leftmargin=*]
		\item the compiler remains in a non-terminating computation, in which case we must have 
			\(\omega\in\LL\setminus\dVar[+](\fVar[+])\);
		\item the compiler exits abnormally and signals that no executable binary was generated, in which case we must have
			\(\omega\in\W\setminus\dVar[+](\fVar[+])\);
		\item the compiler exits normally and returns an executable binary, which is the case if and only if we have
			\(\omega\in\dVar[+](\fVar[+])\).
	\end{enumerate}
	Accordingly, we have \(\dVar[+](\fVar[+])\in\TT_0\): Concerning (compiled) classical programming languages, the set of
	semantically meaningful words is always at least semi-decidable. In many ``simpler'' programming languages, such as, e.g., C,
	we have \(\dVar[+](\fVar[+])\in\TT_{0.5}\) and \(\LL\in\TT_{0.5}\), or \(\LL = \dVar[+](\fVar[+])\).
	For a comprehensive discussion, we refer to~\cite{2006-AhEA}.

	In the circuit model of quantum computing, the semantic domain consists of the computable unitary matrices.
	In the following, let \(\W_{\mathrm{L}}\) be the Kleene closure of some alphabet \(\mathtt{A}_{\mathrm{L}}\)
	and \((\QQ,\qVar[+])\) a QCDL such that
	\begin{align*}	\QQ\subseteq\W_{\mathrm{L}},\quad \qVar[+]:\QQ\supseteq\rightarrow\USet_{\mu}.
	\end{align*}
	The compilation of quantum circuit-description languages is analogous to the classical case, but additionally
	requires an accuracy parameter (c.f. Figure~\ref{fig:QuantumCompiling}). 
	\begin{figure}[!htbp]\centering
		\begin{tikzpicture}[scale =0.9, every node/.style={scale=0.9}]
			\draw (-1.25,-2.5) node [anchor = east, align = right] {\(\omega\in\mathcal{Q}\)};
			\draw (-1.25,-4) node [anchor = east, align = right] {\(\N[II]\in\N\)};
			\draw (5.25,-2.5) node [anchor = west, align = left] {\(|\mathrm{in}\rangle\in\C^{\N[I]}\)};
			\draw (5.25,-4) node [anchor = west, align = left] {\(|\mathrm{out}\rangle\in\C^{\N[I]}\)};
			\draw[thick, fill = black!10!white, rounded corners = 0.5mm] (-0.9,-1.25) rectangle (4.9,-5); 
			\draw[ultra thick, -stealth, shorten >=0.5mm] (-1.25,-2.5) -- (-0.5,-2.5);
			\draw[ultra thick, -stealth, shorten >=0.5mm] (-1.25,-4) -- (-0.5,-4);
			\draw[ultra thick, -stealth, shorten >=0.5mm] (5.25,-2.5) -- (4.5,-2.5);
			\draw[ultra thick, -stealth, shorten >=0.5mm] (4.5,-4) -- (5.25,-4);	
			\draw[thick, fill = white] (-0.5,-1.5) rectangle (1.5,-4.75);
			\draw (0.5,-2) node[anchor = center, align = center,font=\footnotesize] {Syntactic\\ Compiler};
			\draw[ultra thick, -stealth, shorten >=0.5mm] (1.5,-2) -- (2,-2);
			\draw[ultra thick, -stealth, shorten >=0.5mm] (1.5,-2.5) -- (2,-2.5);
			\draw[ultra thick, -stealth, shorten >=0.5mm] (1.5,-3) -- (2,-3);
			\draw[ultra thick, -stealth, shorten >=0.5mm] (1.5,-4) -- (2,-4);
			\draw[ultra thick, -stealth, shorten >=0.5mm] (1.5,-4.5) -- (2,-4.5);
			\draw [thick, decorate, decoration = {brace}] (1.4,-4.5) -- (1.4,-2);
			\draw (1.25,-3.25) node [anchor = east, align = right] {\(\mathrm{qc}(\omega,\N[II])\)};			
			\draw[] (1.75,-3.5) node[rotate = 90, anchor = center, align = center] {\(\cdots\)};
			\draw[thick, draw = white, fill = black] (2,-1.5) rectangle (4.5,-4.75);
			\draw (3.25,-1.75) node[color = white, anchor = center, align = center, font=\footnotesize] {Quantum Circuit};
			\draw [thick, decorate, decoration = {brace}, color = white] (2.1,-2) --  (2.1,-4.5);
			\draw (3.5,-2.25) node [color = white, anchor = center, align = center, font=\small] {\(|\mathrm{in}\rangle\)};
			\draw (3.5,-4.25) node [color = white, anchor = center, align = center, font=\small] {\(|\mathrm{out}\rangle\)};
			\draw[color = white, -stealth, dashed, shorten >= 2mm] (3.5,-3.35) -- (3.5,-4.25);
			\draw[color = white, -stealth, dashed, shorten <= 2mm] (3.5,-2.2225) -- (3.5,-3.15);
			\draw (3.5,-3.25) node [color = white, fill = black, anchor = center, align = center, font=\small] {\(\uVar[-ub](\mathrm{qc}(\omega,\N[II]))\)};
			\draw[color = white, -stealth, shorten >= 5mm] (2.195,-3.25) -- (3,-3.25);	
			\draw[thick, draw = white] (2,-1.5) rectangle (4.5,-4.75);
		\end{tikzpicture}
		\caption{Schematics of Quantum Computing in the Circuit Model.}
		\label{fig:QuantumCompiling}
	\end{figure}
	Formally, let \(\W_{\mathrm{G}}\) be the Kleene closure of some alphabet \(\mathtt{A}_{\mathrm{G}}\) that corresponds to 
	the physical quantum gates available on the hardware under consideration, and denote the unitary transformation that corresponds to some 
	word \(\omega\in\W_{\mathrm{G}}\) by \(\uVar[-ub](\omega)\). That is, if we have \(\mathtt{A}_{\mathrm{G}} = \{\mathtt{a}_0,\ldots
	\mathtt{a}_{\N[II]}\}\) for some \(\N[II]\), the unitary matrices \(\uVar[-ub](\mathtt{a}_0),\ldots,\uVar[-ub](\mathtt{a}_{\N[II]})\in\USet_{\mu}\)
	correspond to the hardware's underlying gate set. Moreover, if \(\omega = \mathtt{b}_1\cdots\mathtt{b}_{\N[I]}\) for \(\mathtt{b}_1,\ldots,
	\mathtt{b}_{\N[I]}\in\mathtt{A}_{\mathrm{G}}\) and some \(\N[I]\in\N\), we have
	\begin{align*}
		\uVar[-ub](\omega) := \uVar[-ub](\mathtt{b}_{\N[I]})\cdots\uVar[-ub](\mathtt{b}_1).
	\end{align*}
	A quantum compiler is then a mapping \(\mathrm{qc} : \W_{\mathrm{L}}\times\N\supseteq\rightarrow\W_{\mathrm{Q}}\)
	that satisfies \(\dVar[+](\mathrm{Q})\times\N\subseteq\dVar[+](\mathrm{qc})\) and
	\begin{align*}	\big\|\mathrm{Q}(\omega) - \uVar[-ub]\big(\mathrm{qc}(\omega,\N[II])\big)\big\| < \frac{1}{2^{\N[II]}}.
	\end{align*}
	for all \(\omega\in\dVar[+](\mathrm{Q}), \N[I]\in\N\). We obtain the following.

	\begin{theorem}\label{thm:noUniversalQCDL}
		If the QCDL \((\QQ,\qVar[+])\) is semantically universal, it satisfies \(\dVar[+](\qVar[+])\in\wp(\W_{\mathrm{L}})\setminus\TT_0\).
	\end{theorem}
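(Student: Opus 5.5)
We argue by contradiction. Suppose \((\QQ,\qVar[+])\) is semantically universal and \(\dVar[+](\qVar[+])\in\TT_0\). By the characterization of \(\TT_0\) given above, there is a total \(\mu\)-recursive function \(\gVar:\N\rightarrow\N\) such that \(\dVar[+](\qVar[+]) = \{[\mathsf{N}_{\W_{\mathrm{L}}}\gVar](\N[i]) : \N[i]\in\N\}\). We may assume this set is nonempty, since a universal language must describe at least one matrix. We then form the sequence
\begin{align*}
	(\wVar[+b]_{\N[i]})_{\N[i]\in\N} : \wVar[+b]_{\N[i]} = \qVar[+]\big([\mathsf{N}_{\W_{\mathrm{L}}}\gVar](\N[i])\big).
\end{align*}

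The first step is to check that this is a computable sequence of unitary matrices in the sense of Definition~\ref{dfn:GenCSq}. Since \(\qVar[+]\) is \((\mathsf{N}_{\W_{\mathrm{L}}},\mathsf{N}_{\smash{\USet}})\)-computable, there is a \(\mu\)-recursive realizer \(\hVar\) that is defined on every realizer of every word in \(\dVar[+](\qVar[+])\). Every \(\gVar(\N[i])\) is such a realizer, so \(\hVar\gVar\) is a \emph{total} \(\mu\)-recursive function with \(\wVar[+b]_{\N[i]} = [\mathsf{N}_{\smash{\USet}}\hVar\gVar](\N[i])\). This is the crucial point: because the enumeration lands only inside the domain of \(\qVar[+]\), no partiality survives. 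Semantic universality then gives
\begin{align*}
	\{\wVar[+b]_{\N[i]}:\N[i]\in\N\} = \mathrm{img}(\qVar[+]) = \USet_{\mu},
\end{align*}
so the computable unitary matrices (in the fixed dimension \(\N[I]\geq 2\)) would be effectively enumerated.

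The second step derives the contradiction. Take \(\uVar[+b]_0\) to be the identity and \(\uVar[+b]_1\) any different computable unitary, for instance a Pauli-\(X\) matrix padded with the identity, which exists because \(\N[I]\geq 2\). Lemma~\ref{lem:Connectedness} gives a Banach-Mazur computable path \(\gamma\) from \(\uVar[+b]_0\) to \(\uVar[+b]_1\), so \(\gamma(0)\neq\gamma(1)\). Lemma~\ref{lem:UnitGrpNotEnum}, applied with this particular computable sequence \((\wVar[+b]_{\N[i]})_{\N[i]\in\N}\), yields some \(\xVar\in[0,1]\cap\R_{\mu}\) with \(\gamma(\xVar)\in\USet_{\mu}\setminus\{\wVar[+b]_{\N[i]}:\N[i]\in\N\}=\emptyset\), which is impossible. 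Hence \(\dVar[+](\qVar[+])\notin\TT_0\). Since \(\dVar[+](\qVar[+])\subseteq\W_{\mathrm{L}}\), it lies in \(\wp(\W_{\mathrm{L}})\setminus\TT_0\).

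The main obstacle is the first step: the computability of \((\wVar[+b]_{\N[i]})_{\N[i]\in\N}\) as a sequence. This requires the realizer of \(\qVar[+]\) to be defined on all realizers of words in its domain, which is exactly what Definition~\ref{dfn:GenCSq} demands. It also requires Lemma~\ref{lem:UnitGrpNotEnum} to accept an \emph{arbitrary} computable sequence, which its proof explicitly asserts. Both points need only be verified carefully rather than proved anew.
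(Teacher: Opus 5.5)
Your proposal is correct and follows the paper's proof essentially step for step. You enumerate the semantically meaningful words via a total $\mu$-recursive function and compose with the computable semantics to obtain a computable sequence exhausting all computable unitaries; you then contradict Lemma~\ref{lem:UnitGrpNotEnum} using a path supplied by Lemma~\ref{lem:Connectedness}. Your additions only make explicit what the paper leaves implicit: the check that the composed realizer is total, the nonemptiness remark needed for the total-function characterization of $\mathcal{T}_0$, and the concrete distinct endpoints (the identity and a padded Pauli-$X$).
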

	
	\begin{proof}[Proof of Theorem~\ref{thm:noUniversalQCDL}]
		We prove Theorem~\ref{thm:noUniversalQCDL} by contradiction. To this end, subsequently assume there exists a semantically universal QCDL 
		\((\QQ,\qVar[+])\) such that \(\dVar[+](\qVar[+])\in\TT_0\). 
		
		Since \(\TT_0\)-languages are recursively enumerable,
		there exists a total \(\mu\)-recursive function \(\gVar:\N\supseteq\rightarrow\N\) such that we have
		\begin{align*}	\QQ = \big\{{[}\mathsf{N}_{\smash{\W,\mathrm{L}}}\gVar{]}(\N[i]) : \N[i]\in\N\big\}.
		\end{align*}
		Define the sequence \((\omega_{\N[i]})_{\N[i]\in\N} : \omega_{\N[i]} = {[}\mathsf{N}_{\smash{\W,\mathrm{L}}}\gVar{]}(\N[i])\). Consequently, we obtain 
		\begin{align*}	\{\omega_{\N[i]}:\N[i]\in\N\} = \dVar[+](\qVar[+]). 
		\end{align*}
		Moreover, observe that \((\omega_{\N[i]})_{\N[i]\in\N}\) is a computable sequence.
		Per definition of description languages, \(\qVar[+]\) must be a computable function. Thus, 
		\begin{align*}	(\uVar[+b]_{\N[i]})_{\N[i]\in\N} : \uVar[+b]_{\N[i]} = \qVar[+](\omega_{\N[i]})
		\end{align*}
		is a computable sequence of unitary matrices that, due to the semantic universality of \((\QQ,\qVar[+])\), satisfies
		\begin{align*}	\USet_{\mu} = \{\uVar[+b]_{\N[i]}:\N[i]\in\N\}.
		\end{align*}
		Finally, choose \(\gamma : [0,1]\cap\R_{\mu} \rightarrow \USet_{\mu}\) such that it satisfies
		Lemma~\ref{lem:UnitGrpNotEnum} -- by Lemma~\ref{lem:Connectedness}, such a function always exists.
		Then, there exists \(\xVar\in [0,1]\cap\R_{\mu}\) that satisfies
		\begin{align*}	\gamma(\xVar) \in\USet_{\mu}\setminus\{\uVar[+b]_{\N[i]}:\N[i]\in\N\} = \emptyset,
		\end{align*}
		which is absurd.
	\end{proof} 
	
	In other words, there does \emph{not} exists a semantically universal QCDL whose set of semantically meaningful words is semi-decidable.
	Equivalently, there does not exist a program that sequentially prints the semantically meaningful words of any such language. Given any compiler 
	for a semantically universal QCDL, the QCDL's unenumerability also implies the existence of source code that successfully compiles despite not describing 
	any quantum circuit.

	As we have seen above, the compilation process of classical programming languages -- particularly those that are less closely tied to hardware -- may itself be Turing complete. 
	That is, such languages often include compiler directives that do not directly affect the semantics of the generated program, but nevertheless influence 
	the compilation process in a computationally nontrivial way. Prominent examples include macro expansion and template metaprogramming in C++, where the compiler 
	essentially generates additional source code during compilation.

	In such settings, syntactic well-formedness remains a decidable property in the grammar-theoretic sense. However, the correspondence between syntactically 
	well-formed words and semantically meaningful programs breaks down: not every syntactically valid program necessarily gives rise to a semantic object, since 
	compilation itself may fail to terminate. At first glance, this situation appears analogous to the case of quantum circuit description languages (QCDLs).
	There is, however, a fundamental difference. In the classical setting, the discrepancy between syntactic well-formedness and semantic meaningfulness 
	is a consequence of language design, arising from the inclusion of computationally powerful compile-time mechanisms. In principle, one could restrict 
	the language to eliminate such features and recover a setting in which syntactic correctness implies semantic meaningfulness.

	In contrast, for QCDLs this discrepancy is not a matter of design, but an inherent consequence of the structure of the semantic domain.
	This observation necessitates the strict distinction between syntactically well-formed words and semantically meaningful descriptions. 
	In the classical case, the latter correspond precisely to those words for which the compiler terminates successfully and produces an executable output. 
	Consequently, the set of semantically meaningful programs is always recursively enumerable. In the QCDL setting, however, this characterization fails: 
	even if a compiler terminates and signals successful compilation, there is in general no effective procedure to verify that the resulting description 
	indeed corresponds to a valid semantic object.

	In our model of QCDLs, each semantically meaningful word corresponds per definition to some computable unitary matrix.
	Within the QCDL setting (see Figure~\ref{fig:QuantumCompiling}), this definition suggests itself: the very purpose of 
	any QCDL is the description of quantum circuits, which in turn characterize unitary matrices. More complex 
	quantum computing settings will generally involve other 
	components, such as conditional branching upon measurement results, hybrid classical-quantum computations, or partial tracing.
	For quantum programming languages applied in such scenarios, the definition of ``semantically meaningful words''
	may be different. Consider a language \(\QQ\subset\W_{\mathrm{L}}\) and denote the (now arbitrary) set of semantically 
	meaningful words by \(\QQ_{\mathrm{sem}}\). The applicability of our theory then depends on whether we can still effectively assign 
	some computable unitary matrix to each semantically meaningful word -- that is, whether there exists a computable function 
	\(\qVar[+m] :\W_{\mathrm{L}}\supseteq\rightarrow\USet_{\mu}\) such that \(\dVar[+](\qVar[+m]) = \QQ_{\mathrm{sem}}\). In any case, however,
	we can make statements about the intersection \(\QQ_{\mathrm{sem}}\cap\dVar[+](\qVar[+m])\): we either have
	\begin{align*}	\USet_{\mu} \setminus \{\qVar[+m](\omega): \omega \in \QQ_{\mathrm{sem}}\cap\dVar[+](\qVar[+m])\} \neq\emptyset,
	\end{align*}
	i.e., the language is not semantically universal with respect to \(\USet_{\mu}\), or the intersection
	\(\QQ_{\mathrm{sem}}\cap\dVar[+](\qVar[+m])\) is not semi-decidable, i.e., there cannot exist a program 
	that detects whether an arbitrary word \(\omega\in \QQ_{\mathrm{sem}}\) corresponds to some computable unitary matrix or not.

	On the other hand, our analysis is \emph{not} restricted to the circuit model of quantum computing. In fact,
	it applies to any description language whose semantically meaningful words describe computable unitary matrices.
	The same way, it is agnostic to the physical hardware that implements these matrices as physical quantum systems.
	While the circuit model of quantum computing is arguably the most prominent quantum-computing model, it is only
	one of many potential methods to implement arbitrary unitary transformations for computing.
	For example, we may consider hypothetical ``matrix quantum hardware platforms'' that allows for specifying
	the individual entries of the unitary matrix to be implemented directly 
	(see Figure~\ref{fig:QuantumComputerFull} for an exemplary setup). As long as this hardware follows the traditional
	``quantum data, digital control'' paradigm -- that is, as long as the hardware is embedded in a digital environment
	environment that handles the quantum system's initialization, process control, and readout -- we must 
	specify the quantum algorithm to be executed digitally. Whenever this specification allows us to 
	digitally compute the entries of the resulting unitary matrix, our theory applies.

		\begin{figure}
		\begin{tikzpicture}
			\draw[thick, fill = black!10!white, rounded corners = 0.5mm] (-1.55,0.5) rectangle (4.35,-6.5); 
			\draw[thick, fill = white] (-1.2,0.25) rectangle (4,-1);
			\draw (1.4,0.25) node[anchor = north, font=\footnotesize] {Matrix Synthesis};
			\draw[ultra thick, -stealth, shorten >=0.5mm] (1.4,-1) -- (1.4,-1.15) -- (0.875,-1.65) -- (0.875,-2);
			\draw[ultra thick, -stealth, shorten >=0.5mm] (3,-1) -- (3,-1.5);
			\draw[ultra thick, stealth-stealth, shorten >=0.5mm, shorten <=0.5mm] (-0.2,-1) -- (-0.2,-4.65) -- (0.45,-5.15) -- (0.45,-5.5);
			\draw[ultra thick] (-0.2,-4.65) -- (-0.85,-5.15) -- (-0.85,-5.5);
			\draw(-0.2,-1) node[anchor=south, align=center,font=\scriptsize] {Post-Proc.\\BUS};
			\draw(1.4,-1) node[anchor=south, align=center,font=\scriptsize] {ME Sequence};
			\draw(3,-1) node[anchor=south, align=center,font=\scriptsize] {Initial-State\\Selection};
			\draw[thick, fill = white] (0.25,-2) rectangle (1.5,-4.5);
			\draw (0.5625,-3.25) node[rotate = 270, anchor = center, align = center,font=\footnotesize] {ME Register};
			\draw[ultra thick, -stealth, shorten >=0.5mm] (1.5,-2.5) -- (2,-2.5);
			\draw[ultra thick, -stealth, shorten >=0.5mm] (1.5,-3) -- (2,-3);
			\draw[ultra thick, -stealth, shorten >=0.5mm] (1.5,-4) -- (2,-4);
			\draw [thick, decorate, decoration = {brace}] (1.4,-4) -- (1.4,-2.5);
			\draw (1.3,-3.25) node [anchor = east, align = right, scale = 0.75] {\(\begin{matrix}\theta_1\\\raisebox{2.5pt}{\(\vdots\)}\\\theta_{\N[I]}\end{matrix}\)};			
			\draw[] (1.75,-3.5) node[rotate = 90, anchor = center, align = center] {\(\cdots\)};
			\draw[ultra thick, -stealth, shorten >=0.5mm] (3,-5) -- (3,-5.5);
			\draw[thick, draw = white, fill = black] (2,-1.5) rectangle (4,-5);
			\draw[thick, draw = white] (2,-2) -- (4,-2);
			\draw (3,-1.75) node[color = white, anchor = center, font=\scriptsize] {Preparation};
			\draw[thick, draw = white] (2,-4.5) -- (4,-4.5);
			\draw (3,-4.75) node[color = white, anchor = center, font=\scriptsize] {Measurement};
			\draw (3.65,-3.25) node[rotate = 270, color = white, anchor = center, align = center, font=\footnotesize] {Quantum Processor};
			\draw [thick, decorate, decoration = {brace}, color = white] (2.1,-2.5) --  (2.1,-4);
			\draw (3,-2.25) node [color = white, anchor = center, align = center, font=\small] {\(|\mathrm{init}\rangle\)};
			\draw (3,-4.25) node [color = white, anchor = center, align = center, font=\small] {\(|\mathrm{out}\rangle\)};
			\draw[color = white, -stealth, dashed, shorten >= 2mm] (3,-3.35) -- (3,-4.25);
			\draw[color = white, -stealth, dashed, shorten <= 2mm] (3,-2.2225) -- (3,-3.15);
			\draw (3,-3.25) node [color = white, fill = black, anchor = center, align = center, scale = 0.75] {\(\uVar[-ub](\theta_{1:\N[I]})\)};
			\draw[color = white, -stealth, shorten >= 5mm] (2.195,-3.25) -- (3,-3.25);	
			\draw[thick, fill = white] (-0.55,-5.5) rectangle (4,-6.25);
			\draw (1.725,-5.875) node[anchor = center, font=\footnotesize] {Classical Post-Processing};
			\draw[ultra thick] (-0.55,-5.875) -- (-0.85,-5.875) -- (-1.125,-5.6); 
			\draw[ultra thick] (-0.85,-5.885) node [anchor = center, align = center] {\(\bullet\)}; 
			\draw[ultra thick, -stealth, shorten >=1.2cm] (-1.235,-5.875) -- (-1.65,-5.875) -- (-3.075,-5.25);
			\draw(-0.2,-2.825) node[anchor=north, align=center,font=\scriptsize, rotate = 270] 	
				{	Post-Proc. Instrunctions \(\rightarrow\)\\
					\(\leftarrow\) Repeat Computation?
				};
			\draw[ultra thick, -stealth, shorten >=0.5mm] (-3.075,-0.75) -- (-1.65,-0.375) -- (-1.2,-0.375);
			\draw[ultra thick, -stealth, shorten >=0.785cm, densely dotted] (-3.075,-5.25) -- (-3.075,-3);
			\draw[ultra thick, -stealth, shorten >=0.785cm, densely dotted] (-3.075,-3) -- (-3.075,-0.75);
			\draw[draw = black, fill = white!90!black] (-3.075,-0.75) ellipse (1.35cm and 0.75cm);
			\draw[draw = black, fill = white!90!black] (-3.075,-3) ellipse (1.35cm and 0.75cm);
			\draw[draw = black, fill = white!90!black] (-3.075,-5.25) ellipse (1.35cm and 0.75cm);
			\draw[] (-3.075,-0.75) node [anchor = center, align = center, font=\footnotesize] {Mathematical\\ Description\\ of Objects};
			\draw[] (-3.075,-3) node [anchor = center, align = center, font=\footnotesize] {Real-World\\ Physical Problem};
			\draw[] (-3.075,-5.25) node [anchor = center, align = center, font=\footnotesize] {Computed Solution\\ to Physical Problem};
		\end{tikzpicture}
		\caption{\emph{Schematics of a ``Matrix Quantum Computer''}. The matrix-synthesis handler receives the description of a quantum algorithm intended 
			to solve some real-world physical problem. The description employs a high-level quantum programming language
			that hides the underlying computations from the user. Based on the given description, the handler 
			determines suitable post-processing steps, selects an initial state for the available qubit memory, and extracts
			a vector of \emph{matrix elements} (ME) \(\theta_1,\ldots,\theta_{\N[I]} \equiv \theta_{1:\N[I]}\) that correspond to the
			\(\N[I] = \N[II]\cdot\N[II]\) entries of a unitary \(\N[II]\)-by-\(\N[II]\) matrix \(\uVar[-ub](\theta_{1:\N[I]})\). The quantum matrix processor
			reads the entries from the ME register and implements the corresponding unitary matrix as the evolution of a physical quantum system.}
			\label{fig:QuantumComputerFull}
		\end{figure}
	
\section{Conclusion}
	We have shown that semantically universal quantum circuit description languages exhibit an inherent limitation: 
	the set of semantically meaningful program descriptions is not recursively enumerable. Consequently, no compiler can semi-decide the validity of 
	quantum circuit descriptions in such languages.

	This stands in sharp contrast to classical programming languages. There, even in the presence of non-terminating compilation procedures, 
	the set of valid programs remains recursively enumerable. The underlying reason is that classical semantic domains admit partiality, 
	allowing programs to denote undefined computations. In the quantum setting, however, semantic objects are total unitary operators, and 
	validity requires that a description corresponds to such an object.

	From a computability-theoretic perspective, this places universal QCDLs in a similar category as description systems for total recursive functions, 
	whose index sets are known to be non-recursively enumerable. The result thus identifies a fundamental structural distinction between classical and 
	quantum programming paradigms.

\section*{Acknowledgment}
The authors used OpenAI's \emph{ChatGPT} for language and grammar editing. All content was reviewed and edited by the authors, who take full 
responsibility for the final work.

\balance
\bibliographystyle{IEEEtran}
\bibliography{IEEE-QuWk-2026.Bibliography.bib}

\vspace{12pt}
\end{document}